\documentclass[aps,12pt,tightenlines,prd,onecolumn,superscriptaddress,eqsecnum, nofootinbib]{revtex4}
\usepackage{graphicx}
\usepackage{dcolumn}
\usepackage[dvipsnames]{xcolor}
\usepackage{bm}
\usepackage{amssymb}
\usepackage{amsthm}
\usepackage[colorlinks=true,linkcolor=blue,citecolor=blue,urlcolor=blue]{hyperref}
\usepackage{amsmath}
\usepackage{tikz}
\usetikzlibrary{decorations.pathmorphing}
\usepackage{subfig}
\usepackage{enumerate}   
\usepackage{datetime}

\newcommand{\be}{\begin{equation}}
\newcommand{\ee}{\end{equation}}
\newcommand{\bea}{\begin{eqnarray}}
\newcommand{\eea}{\end{eqnarray}}

\newcommand{\dd}{{\rm d}}
\newcommand{\x}{{\rm x}}
\newcommand{\ii}{{\rm i}}
\newcommand{\expe}{{\rm e}}
\newcommand{\unit}{1\!\!1}

\usepackage{mathrsfs}

\newcommand{\gre}{\epsilon}

\newcommand{\grl}{\lambda}

\newcommand{\grr}{\rho}
\newcommand{\grs}{\sigma}
\newcommand{\grt}{\tau}

\newcommand{\grw}{\omega}

\newcommand{\nn}{\nonumber}

\newtheorem{theorem}{Theorem}
\newtheorem{lemma}[theorem]{Lemma}

\theoremstyle{definition}

\newdateformat{daymonthyear}{\THEDAY \, \monthname[\THEMONTH] \THEYEAR}
\newdateformat{monthyear}{\monthname[\THEMONTH] \THEYEAR}


\begin{document}

\title{Asymptotic states for stationary Unruh-DeWitt detectors}
\author{Benito A. Ju\'arez-Aubry}\email{benito.juarez@iimas.unam.mx}
\affiliation{Departamento de F\'isica Matem\'atica, \\
Instituto de Investigaciones en Matem\'aticas Aplicadas y en Sistemas, \\ Universidad Nacional Aut\'onoma de M\'exico,\\Apartado Postal 20-126, 01000 Mexico City, Mexico}
\author{Dimitris Moustos}\email{dmoustos@upatras.gr}
\affiliation{Department of Physics, University of Patras, 26504 Patras, Greece}
\date{\daymonthyear\today}

\begin{abstract}

We study the late-time asymptotic state of a stationary Unruh-DeWitt detector interacting with a field in a thermal state. We work in an open system framework, where the field plays the role of an environment for the detector. The long-time interaction between the detector and the field is modelled with the aid of a one-parameter family of switching functions that turn on and off the interaction Hamiltonian between the two subsystems, such that the long-time interaction limit is reached as the family parameter goes to infinity. In such limit, we show that if the field is in a Kubo-Martin-Schwinger (KMS) state and the detector is stationary with respect to the notion of positive frequency of the field, in the Born-Markov approximation, the asymptotic state of the detector is a Gibbs state at the KMS temperature. We then relax the KMS condition for the field state, and require only that a frequency-dependent version of the detailed balance condition for the Wightman function pulled back to the detector worldline hold, in the sense that the inverse temperature appearing in the detailed balance relation need not be constant. In this setting, we show that the late-time asymptotic state of the detector has the form of a thermal density matrix, but with a frequency-dependent temperature. We present examples of these results, which include the classical Unruh effect and idealised Hawking radiation  (for fields in the HHI state), and also the study of the late-time behaviour of detectors following stationary ``cusped" and circular motions in Minkowski space interacting with a massless Klein-Gordon field in the Minkowski vacuum. In the cusped motion case, a frequency-dependent, effective temperature for the asymptotic late-time detector state is obtained analytically. In the circular motion case, such effective temperature is obtained numerically.
\end{abstract}

\maketitle

\section{Introduction}


An Unruh-DeWitt detector \cite{Unruh,Dewitt} (see also \cite{Birrell}) is a microscopic system coupled with a quantum field through an interaction Hamiltonian, which seeks to operationally probe the physical properties of the field. Information about the field is incorporated in the final state of the detector and can be extracted by performing a suitable measurement. A notable application of Unruh-DeWitt detectors is to probe the thermal properties of quantum fields in the case of fields propagating in curved spacetime backgrounds or detectors following non-inertial trajectories in Minkowski spacetimes. An example of the former case is (i) the Hawking effect \cite{hawking}, according to which a black hole radiates to infinity its mass in the form of thermal radiation, which in the case of the Schwarzschild spacetime is at a temperature $T=\kappa/(2\pi)$, where $\kappa$ is the surface gravity of the black hole. The latter case is (ii) the Unruh effect \cite{Unruh}, according to which an observer moving with uniform proper acceleration $a$, experiences the  Minkowski vacuum as a heat bath at the Unruh temperature $T_U = a/(2\pi)$.

The aim of this work is to study the emergence of thermality in Unruh-DeWitt detectors that follow stationary trajectories. To this end, we consider a two-level system detector that is linearly coupled to a Klein-Gordon field in a Kubo-Martin-Schwinger (KMS) state \cite{Kubo, Schwinger, Haag}. We use a real smooth compactly supported function to specify how the interaction between the detector and the field is switched on and off. In this framework, we treat the detector as an open quantum system, with the field playing the role of the environment, and present a systematic way to calculate the asymptotic state of the detector after a long interaction with the field, employing the mathematical properties of the one-parameter family of adiabatically-scaled switching functions, where the parameter controls the interaction time between the detector and the field. Employing the Born-Markov approximation, we find that if the detailed balanced condition is satisfied for the pullback of the Wightman function of the field along the worldline of the detector, then the detector always thermalises to an equilibrium Gibbs state at the KMS temperature.

The open system formalism that we adopt (see, e.g., \cite{Breuer}) also allows us to conclude that the KMS condition is not the most general condition on a field state whereby a detector relaxes to a stationary state. To wit, if a {\it frequency-dependent} version of the detailed balance condition holds for the Wightman function along the detector's worldline, in which the temperature is a function of the detector's frequency, the detector will relax to a state to which a frequency-dependent effective temperature can be associated. As we shall see, this result opens, for example, the possibility of associating effective temperatures to stationary trajectories in Minkowski spacetime other than the linearly uniformly accelerated one, such as the ones reported in \cite{Letaw}.

We give a few examples of applications of our result. In particular, we consider the Unruh effect for uniformly accelerated detectors and the Hawking effect for detectors interacting with a field in the Hartle-Hawking-Israel vacuum. Furthermore, we consider the cases of cusped and circular trajectories followed by the detector, where we find that the detector thermalises at a frequency-dependent effective temperature. In the cusped motion case, such effective temperature can be computed analytically, while in the circular case we resort to numerical computations.

The structure of the article is the following. In Sec. \ref{UDW:KMS}, we introduce the Unruh-DeWitt detector model and give a brief review of the basic properties of the KMS states, and its relation to the detailed balance condition \cite{BL}. In Sec. \ref{evol:eq}, we derive, in the Born-Markov approximation, the time evolution equations for the reduced density matrix of an Unruh-DeWitt detector interacting with a field in a thermal state. In Sec. \ref{Sec:Equilibrium}, we study the thermalisation of the detector in the long-time interaction limit at a constant or frequency-dependent temperature. In Sec. \ref{apps}, we apply the results obtained in the previous sections to different stationary trajectories of the detector that lead to its thermalisation at a constant or frequency-dependent temperature. Finally, in Sec. \ref{concl}, we summarise and discuss our results, and give some perspectives.


\subsection{Notation and preliminaries}
\label{Notation}

We use the following conventions: A spacetime $\mathcal{M}:= (M,g)$ consists of a real four-dimensional, (connected, Hausdoff, paracompact) smooth manifold, $M$, equipped with a Lorentzian metric $g$. Spacetime points are denoted by Roman characters ($\x$). We restrict our spacetimes to be globally hyperbolic and deal throughout only with static spacetimes. We fix $c = \hbar = k  = 1$. Complex conjugation is denoted by an overline. The adjoint of a Hilbert-space operator, $\hat o$, is denoted by $\hat o^*$. We use the initials H.c. to denote Hermitian conjugate. $O(x)$ denotes a quantity for which $O(x)/x$ is bounded as $x \to 0$. We use the standard notation $C_0^\infty(M)$ for the space smooth functions of compact support on $M$ with distributional dual $\mathscr{D}'(M)$. $\mathscr{S}(\mathbb{R})$ denotes the space of Schwartz functions on $\mathbb{R}$ and $\mathscr{S}'(\mathbb{R})$ is the space of tempered distributions. The convolution between two functions $f,g: \mathbb{R} \to \mathbb{R}$ is denoted by $(f \star g)(t) = \int_\mathbb{R} \! \dd s f(s) g(t-s)$. Our Fourier transform convention is $\widehat f(\omega) = \int_\mathbb{R} \! \dd t \, \expe^{-\ii \omega t} f(t)$ and for the inverse Fourier transform as $g(t) = (2 \pi)^{-1} \int_\mathbb{R} \! \dd \omega \, \expe^{\ii \omega t} \widehat f(\omega)$, where we used wide carets to denote the Fourier transform, $\widehat{f} = \mathcal{F}[f]$. 

\section{The Unruh-DeWitt detector model}\label{UDW:KMS}

\subsection{The coupling between the detector and the field}

We consider static spacetimes, i.e., spacetimes of the form $(\mathbb{R}\times \Sigma, g)$ where the spacetime has an isometry generated by an irrotational timelike Killing vector field. In other words, the metric tensor admits the form $g(t,x) = \beta(x) dt^2 +  h(x)$, where $h$ is a Riemannian metric on $\Sigma$ and $x$ are coordinates on $\Sigma$, and with $\partial_t$ being the relevant Killing vector field.

In such static spacetimes, we model an Unruh-DeWitt detector by a pointlike two-level quantum system with frequency gap $\grw>0$, moving through spacetime and interacting with a quantum field.

More precisely, the detector is a two-level system with spacetime trajectory $\x(\tau)$, with $\tau$ the proper time of the detector, described by a monopole moment operator expressed in terms of the ladder operators $\hat{\grs}_{\pm}$ of the Pauli algebra as
\be\label{mmoment}
\hat \mu(\grt)=e^{\ii\grw\grt}\hat{\grs}_++e^{-\ii\grw\grt}\hat{\grs}_-,
\ee
which acts on the detector Hilbert space, $\mathscr{H}_{\rm D} \simeq \mathbb{C}^2$, spanned by the orthonormal basis $\{|0\rangle, |1 \rangle \}$. The Hamiltonian of the detector with respect to $\tau$, $\hat H_{\rm D} =\omega\hat{\sigma}_3/2$, has the elements of the orthonormal basis as frequency (energy) eigenstates $\hat H_{\rm D} |0\rangle = 0 |0\rangle$ and $\hat H_{\rm D} |1\rangle = \omega |1 \rangle$.

We shall consider throughout that the Unruh-DeWitt detector interacts with a real Klein-Gordon field, i.e., an operator-valued distribution $f \mapsto \hat \Phi(f)$, for $f \in C_0^\infty(M)$ (or a suitable test-function space), acting densely on a Hilbert space, $\hat \Phi(f): \mathscr{D} \subset \mathscr{H} \to \mathscr{H}$, such that (i) the map $f \mapsto \hat \Phi(f)$ is linear, (ii) the field is self-adjoint $\hat \Phi(f)^* = \hat \Phi(f)$, (iii) the Klein-Gordon equation is obeyed $\hat \Phi\left( (\Box -m^2 - \xi R) f \right) = 0$, which is equivalent to $(\Box_\x -m^2 - \xi R(\x)) \hat \Phi(\x)$ by integration by parts and the arbitrariness of the test function $f$ and (iv) the commutation relations are fulfilled, i.e., for $f_1, f_2 \in C_0^\infty(M)$ we have $[\hat \Phi(f_1), \hat \Phi(f_2)] = - \ii E(f_1, f_2)$, where $E$ is the advanced-minus-retarded Green operator of $\Box -m^2 - \xi R$. The Hamiltonian of the theory can be unambiguously defined with respect to the timelike Killing vector of spacetime.

In static spacetimes, the unsmeared quantum field can be written explicitly, as follows. The field equation can be written explicitly as $-\partial_t^2 \hat \Phi = g_{00} (\Delta_x - m^2 - \xi R) \hat \Phi =: K \hat \Phi$, which for harmonically time-dependent solutions becomes the eigenvalue problem on $\Sigma$,
\begin{equation}
K \psi_j = \omega^2_j \psi_j.
\label{Eigen}
\end{equation}

Assuming that $K$ defines a positive, self-adjoint operator, we have that the quantum field can be written as an eigenfunction expansion of the form
\begin{equation}
\hat{\Phi}(t, x) = \int \! \frac{d\mu(j)}{\sqrt{2 \omega_j}} \left( \hat{a}_j \, \expe^{-\ii \omega_j t} \psi_j(x) + \hat{a}^*_j \, \expe^{+\ii \omega_j t} \overline{\psi_j(x)}  \right).
\end{equation}

A natural one-particle structure, $(\mathcal{H}, K)$, consists of the one-particle Hilbert space, $\mathcal{H}$, consisting of complex, positive-frequency (with respect to the time $t$) solutions equipped with the $L^2(\Sigma, \dd\mu = g_{00}^{-1/2}{\rm dvol}_h)$ inner product and the map $K$ taking classical solutions of the Klein-Gordon equation, $\varphi$, to $K\varphi \in \mathcal{H}$, with dense range in $\mathcal{H}$. A natural zero-temperature vacuum is that for which $\hat{a}(K\varphi) \Omega_\infty = 0$ and for which $\hat{a}^*(K \varphi)$ defines a creation operator. The bosonic Fock space of the theory based on the vacuum is given by $\mathscr{H} = \mathcal{F}(\mathcal{H})$, constructed in the usual way. In the sequel, however, we will be concerned with states at finite temperature $T =1/\beta$, which we denote by $\Omega_\beta$, satisfying the KMS condition, and which are cyclic in the Fock space $\mathscr{H}_\beta$. As we shall see below, the two-point function in such KMS states can also be represented in terms of the modes defined by Eq. \eqref{Eigen}.

The Hamiltonian of the total system is given by $\hat H = \hat H_{\rm D} \otimes \unit+\unit\otimes \hat H_{\hat \Phi}+\hat H_{{\rm int}}$, where $\hat H_{\hat \Phi}$ is the Hamiltonian of the Klein-Gordon field and $\hat H_{{\rm int}}$ is the interaction Hamiltonian. We take the interaction Hamiltonian to be
\be
\hat H_{\text{int}}(\grt)=c\chi(\grt)\hat \mu(\tau)\otimes\hat \Phi\left(\x(\tau)\right),
\label{Hint}
\ee
where $c$ is small a coupling constant
and $\chi\in C_0^{\infty}(\mathbb{R})$ is a real smooth compactly supported function, which specifies how the interaction is switched on and off between the detector's monopole moment operator, $\hat \mu$, and the pullback of the field to the detector's worldline, $\hat \Phi\left(\x(\tau)\right)$.

 Throughout this work, we will be chiefly concerned with the cases in which the detector's worldline, $\x(\tau)$ coincides with the orbit generated by a timelike Killing field, i.e., when the detector follows a {\bf stationary trajectory}. In these cases, when the two-point function of the field is pulled back to the detector trajectory, it will only depend on the points along the worldline as the difference $\tau - \tau'$, as shown by Letaw  \cite{Letaw}.

\subsection{KMS states}
\label{sec:KMS}

Throughout this work, we are interested in the situations in which the field is initially in a KMS state. Such states are defined by a condition on the two-point function. Because they are quasi-free, like the vacuum state, the two-point function contains sufficient information for defining all $n$-point functions, and hence define the state via the Gelfand-Naimark-Segal (GNS) construction. 

Let us start by explicitly giving an expression for the Wightman function of a thermal state at temperature $1/\beta$ in static spacetimes as a mixture of positive and negative frequency modes,
\begin{align}
\mathcal{W}^+_\beta((t,x),(t',x')) & =\langle \Omega_\beta | \hat \Phi(t,x) \hat{\Phi}(t,x') | \Omega_\beta \rangle  \nonumber \\
&  = \int \! \frac{\dd \mu(j)}{2 \omega_j} \frac{\psi_j(x) \overline{\psi_j(x')}}{1 - \expe^{-\beta \omega_j}} \left( \expe^{-\ii \omega_j(t-t')} + \expe^{-\beta \omega_j} \expe^{\ii \omega_j(t-t')}  \right).
\label{WigthBeta}
\end{align}

Equation \eqref{WigthBeta} should be interpreted as a distributional kernel, with $\mathcal{W}^+_\beta$ mapping $C_0^\infty(M) \times C_0^\infty(M) \ni (f,g) \mapsto \mathcal{W}^+_\beta(f,g)$. The thermal Wightman function defined by Eq. \eqref{WigthBeta} is translation invariant, whereby $\mathcal{W}^+_\beta((t,x),(t',x')) = \mathcal{W}^+_\beta((t-t',x),(0,x'))$, and we can associate to it the complex-time two-point function $\mathscr{W}_\beta^+(z, x, x')$, by replacing the time difference $t-t' \in \mathbb{R}$ in Eq. \eqref{WigthBeta} by the variable $z \in \mathbb{C}$. $\mathscr{W}_\beta^+(z, x, x')$ is analytic in the strip $- \beta < \Im z < 0$. 

Similarly, a complex-time two-point function analytic in the strip $0 < \Im z < \beta$, denoted by $\mathscr{W}_\beta^-(z, x, x')$, can be associated to $\mathcal{W}^-_\beta((t,x),(t',x'))$. 

For spacelike separated points, $\mathcal{W}^+_\beta((t,x),(t',x')) = \mathcal{W}^-_\beta((t,x),(t',x'))$ by the commutation relations, and hence  $\mathscr{W}_\beta^+(z, x, x') = \mathscr{W}_\beta^-(z, x, x')$ for $|z| < d(x,x')$ along the real axis, where $d$ denotes the spacelike geodesic distance between the points $x$ and $x'$. Hence, $\mathcal{W}^+_\beta$ and $\mathcal{W}^-_\beta$ are analytic continuations of each other. Generically, there will be branch cuts extending from $(-\infty, -d)$ and $(d, \infty)$ along the real axis on the complex plane, over which the two-point functions ``jump", corresponding to causally separated arguments of the (real-time) two-point functions.

The above procedure can be used to analytically continue the complex-valued functions $\mathscr{W}_\beta^+((t',x'),(t,x))$ at $\Im z = -\beta$ and $\mathscr{W}_\beta^-((t',x'),(t,x))$ at $\Im z = +\beta$, and again at all values $\Im z = n \beta$, $n \in \mathbb{Z}$, where branch cuts are located, with the aid of the relation $\mathscr{W}_\beta^+(z - \ii \beta + \ii \epsilon, x, x') = \mathscr{W}_\beta^-(z+\ii \epsilon, x, x')$, as $\epsilon \to 0^+$, which is more commonly written when no notational confusion may arise as
\begin{equation}
\mathcal{W}_\beta^+((t-t') - \ii \beta + \ii \epsilon, x, x') = \mathcal{W}_\beta^+(-(t-t')-\ii \epsilon, x', x),
\label{KMSCondition}
\end{equation}
where the $\ii \epsilon$ is interpreted in the distributional sense.\footnote{In other words, we mean that for fixed $x, x'$ and for all $f \in \mathscr{S}'(\mathbb{R})$, the weak limits coincide as $\int \! \dd s \, f(s) \mathcal{W}_\beta^+(s - \ii \beta + \ii \epsilon, x, x') = \int \! \dd s \, f(s) \mathcal{W}_\beta^+(-s-\ii \epsilon, x', x)$.} Equation \eqref{KMSCondition} is known as the {\bf KMS condition} and it is the relation that defines thermal states through the analytic structure of their (complex-time) two-point function.

In the case of detectors, in order to compute their response, the two-point function is pulled back to the worldline of the detector and the branch cuts separate strips of width $\beta$ for the pullback of the two-point function, which we denote simply by $\mathcal{W}_\beta^+(\tau - \tau') = \mathcal{W}_\beta^+((t(\tau)-t(\tau')), x(\tau), x(\tau'))$. Based on this observation, the definition of the KMS condition along the worldline was given in \cite[Def. 4.1]{BL}. We restate the definition in order to make the present section self-contained.

\defn[Worldline KMS condition, 4.1 in \cite{BL}]
{Suppose there is a positive constant $\beta$ and a holomorphic
function $\mathcal{W}_\mathbb{C}$ in the strip $S = \{ z \in \mathbb{C} ∣ - \beta < \Im z < 0 \}$ such that
\begin{enumerate}
\item $\mathcal{W}$ is the boundary value of $\mathcal{W}_\mathbb{C}$ on the real axis;
\item $\mathcal{W}_\mathbb{C}$ has a distributional boundary value on the line $\Im s = - \ii \beta$;
\item The two boundary values of $\mathcal{W}_\mathbb{C}$ are linked by
\begin{equation}
\mathcal{W}_\mathbb{C} (s - \ii\epsilon) = \mathcal{W}_\mathbb{C} ( - s - \ii \beta + \ii\epsilon )
\end{equation}
 as $\epsilon \to 0^+$ in the weak limit sense for $s \in \mathbb{R}$.
\item  For any $0 < a < b < \beta$ there is a polynomial $P$, so that $| \mathcal{W}_\mathbb{C}(s) | < P(| \Re s |)$ for all $s \in S$, with $- b < \Im s < - a$.
\end{enumerate}
Then we say that $\mathcal{W}_\mathbb{C}$ obeys the KMS condition at temperature $1/\beta$.}\label{KMSdef} 

We have defined the KMS condition in terms of an abstract holomorphic function $\mathcal{W}_\mathbb{C}$, but in the applications that follow $\mathcal{W}_\mathbb{C}(z)$ will take the concrete form of the complex-time two-point function associated to the worldline-pullback two-point function $\mathcal{W}_\beta^+(\tau-\tau')$. 

Condition 4 in Def. \ref{KMSdef} is a technical requirement, which in ref. \cite{BL} is used to show that the {\bf detailed balance condition} and the KMS condition are equivalent along the worldline. Namely, the following proposition holds:

{\proposition[4.3 in \cite{BL}] \label{PropDBC} Let $\widehat{\mathcal{W}}_\mathbb{C} \geq 0$ be the Fourier transform of $\mathcal{W}_\mathbb{C}$ and let $\widehat{\mathcal{W}}_\mathbb{C}(\omega) \leq P(\omega)$, where $P$ is a Polynomial function. $\mathcal{W}_\mathbb{C}$ satisfies the KMS condition at temperature $1/\beta$ iff $\widehat{\mathcal{W}}_\mathbb{C}$ satisfies the detailed balance condition,
\begin{equation}
\widehat{\mathcal{W}}_\mathbb{C}(-\omega) = \expe^{\beta \omega} \widehat{\mathcal{W}}_\mathbb{C}(\omega).
\label{DBC}
\end{equation}}


We should mention that whenever $\mathcal{W}_\beta^+$ stems from a two-point function, then it must be a distribution of positive type, i.e., as an element of $\mathscr{D}'(\mathbb{R})$, for $f \in C_0^\infty(\mathbb{R})$, it holds that $\mathcal{W}_\beta^+(f \star \overline{\tilde{f}}) \geq 0$, where $\tilde{f}(s) = f(-s)$. By the Bochner-Schwartz theorem, $\mathcal{W}_\mathbb{C}$ is a tempered distribution and $\widehat{\mathcal{W}}_\mathbb{C}$ is a polynomially bounded positive measure. Thus, demanding that $\widehat{\mathcal{W}}_\mathbb{C}$ be a polynomially bounded non-negative function is a weak assumption.


\section{Time evolution of the detector in the Born-Markov approximation}\label{evol:eq}

We treat the detector as an open quantum system, with the Klein-Gordon field playing the role of the environment, inducing dissipation and decoherence. The dynamics of the density operator of the total system in the interaction picture, $\hat \grr_{\text{tot}}: \mathbb{C}^2 \otimes \mathscr{H}_\beta \to \mathbb{C}^2 \otimes \mathscr{H}_\beta$,  is ruled by \cite{Breuer}
\be\label{von}
\frac{d}{d\grt}\hat \grr_{\text{tot}}(\grt)=-\ii\left[\hat H_\text{int}(\grt),\hat\grr_{\text{tot}}(\grt)\right].
\ee

Given that we have assumed that the coupling between the detector and the  field is weak, i.e., that $c$ is small, we are allowed to employ the Born approximation, which assumes that the state
  of the total system at time $\grt$ approximates a tensor product
\be
\hat \grr_{\text{tot}}(\grt)\approx \hat \grr(\grt)\otimes \hat \grr_{\hat{\Phi},\beta},
\ee
where $\hat \grr$ is the reduced density matrix of the detector, and we assume that the field is found in a thermal state $\hat{\grr}_{\hat{\Phi},\beta} = |\Omega_\beta \rangle \langle \Omega_\beta |$. 

We are interested in the evolution of the detector. Inserting the tensor product in Eq. (\ref{von}) and taking the partial trace over the degrees of freedom of the field, we obtain to leading order in the coupling constant an integrodifferential equation for the reduced density matrix of the detector
\begin{align}\label{mastereq}
\dot{\hat{\rho}}(\grt)& =c^2\int_{0}^{\grt}d\grt'\Big\{\Big(\hat{\grs}_-\hat{\grr}(\grt')\hat{\grs}_-e^{-2\ii\grw \grt}+\hat{\grs}_-\hat{\grr}(\grt')\hat{\grs}_+
-\hat{\grs}_+\hat{\grs}_-\hat{\grr}(\grt')\Big)e^{\ii\grw(\grt-\grt')}\nn\\&+
\Big(\hat{\grs}_+\hat{\grr}(\grt')\hat{\grs}_+e^{2\ii\grw \grt}+\hat{\grs}_+\hat{\grr}(\grt')\hat{\grs}_-
-\hat{\grs}_-\hat{\grs}_+\hat{\grr}(\grt')\Big)e^{-\ii\grw(\grt-\grt')}\Big\}\chi(\grt)\chi(\grt')\mathcal{W}^+_\beta(\grt -\grt')\nn\\&+ \text{H.c.},
\end{align}
where $\mathcal{W}^+_\beta(\grt -\grt')$ is the ``positive-frequency", or rather ``positive mixture", KMS Wightman function introduced above.

We next employ the Markov approximation, which assumes that the evolution of the state of the detector at some time $\tau'>\tau$ is completely determined by its present state $\rho(\tau)$. The environment (the field) does not retain any information flowed from the system (the detector) due to their interaction, i.e., there are no memory effects, and thus no information is transferred back to the system affecting its subsequent evolution. The Markov approximation is obtained if in Eq. (\ref{mastereq}) we replace the density matrix $\hat \rho(\grt')$ by $\hat \rho(\grt)$ and extend the upper limit of integration to infinity. 

 A word on the Markov approximation is due. In \cite{DMCA} the role played by non-Markovian effects in the Unruh effect has been studied. In particular, in Sec. IV C in that work it is pointed that, for a detector interacting constantly with the field, at early times, $\omega \tau = O(1)$, non-Markovian effects are important. In this sense, the Markovian approximation that we employ is well motivated, since we are concerned with late-time asymptotic states, in the same way that the standard derivation of the Unruh effect (see e.g. \cite{Birrell}) is justified, where the transition probability of the detector is calculated (within perturbation theory) for long interaction times.

Expressing the density operator in a matrix form we obtain for its elements
\begin{subequations}
\label{mrij}
\begin{align}\label{mr11}
 \dot{\grr}_{11}(\grt) =&-c^2 \grr_{11}(\grt)\int_0^{\infty}d\grt'\chi(\grt)\chi(\grt')\left[e^{\ii\grw (\grt-\grt') }\mathcal{W}_{\beta}^{+}(\grt-\grt')+e^{-\ii\grw (\grt-\grt')}\mathcal{W}_{\beta}^{-}(\grt-\grt')\right]
\nonumber \\
&+ c^2 \grr_{00}(\grt)\int_0^{\infty}d\grt'\chi(\grt)\chi(\grt')\left[e^{-\ii\grw (\grt-\grt')}\mathcal{W}_{\beta}^+(\grt-\grt')+e^{\ii\grw (\grt-\grt')}\mathcal{W}_{\beta}^-(\grt-\grt')\right],
\end{align}
\begin{align}\label{mr00}
\dot{\grr}_{00}(\grt) =&-c^2 \grr_{00}(\grt)\int_0^{\infty}d\grt'\chi(\grt)\chi(\grt')\left[e^{-\ii\grw (\grt-\grt') }\mathcal{W}_{\beta}^{+}(\grt-\grt')+e^{\ii\grw (\grt-\grt')}\mathcal{W}_{\beta}^{-}(\grt-\grt')\right]
\nonumber \\
&+ c^2 \grr_{11}(\grt)\int_0^{\infty}d\grt'\chi(\grt)\chi(\grt')\left[e^{\ii\grw (\grt-\grt')}\mathcal{W}_{\beta}^{+}(\grt-\grt')+e^{-\ii\grw (\grt-\grt')}\mathcal{W}_{\beta}^{-}(\grt-\grt')\right],
\end{align}
\begin{align}\label{mr01}
\dot{\grr}_{01}(\grt) =&-c^2 \grr_{01}(\grt)\int_0^{\infty}d\grt'\chi(\grt)\chi(\grt')\left[e^{-\ii\grw (\grt-\grt') }\mathcal{W}_{\beta}^{+}(\grt-\grt')+e^{-\ii\grw (\grt-\grt')}\mathcal{W}_{\beta}^{-}(\grt-\grt')\right]
\nonumber \\
&+ c^2 \grr_{10}(\grt)e^{-2\ii\grw\grt}\int_0^{\infty}d\grt'\chi(\grt)\chi(\grt')\left[e^{\ii\grw (\grt-\grt')}\mathcal{W}_{\beta}^{+}(\grt-\grt')+e^{\ii\grw (\grt-\grt')}\mathcal{W}_{\beta}^{-}(\grt-\grt')\right].
\end{align}
\end{subequations}

Equations \eqref{mr11}-\eqref{mr01} describe the time evolution of the reduced density matrix of the detector in the Born-Markov approximation for a general form of the switching function. 

Notice that from a mathematical standpoint the switching function, $\chi \in C_0^\infty(\mathbb{R})$ (multiplied by the exponential factors) plays the role of a test function for the distributions $\mathcal{W}_{\beta}^{\pm}$, and in this sense the Eq. \eqref{mrij} are unambiguously defined.  From a physical standpoint, it is natural to assume that the interaction between the detector and the field is switched on at proper time $\tau = 0$, so that ${\rm supp}(\chi) \subset \mathbb{R}^+_0$.


\section{Asymptotic equilibrium states and adiabatic scaling}
\label{Sec:Equilibrium}

We are interested in considering the emergence of thermality in the detector's reduced density matrix, $\hat \rho$, for long interaction times with the field in a KMS state. In this sense, and following \cite{BL}, we introduce a one-parameter family of adiabatically scaling switching functions. For $\lambda \geq 1$, we set
\be
\chi_{\grl}(\grt)=\chi(\grt/\grl),
\ee
where $\grl$ is a positive parameter that specifies the duration of the interaction, with the long-time asymptotic regime approached as $\lambda$ becomes large. The role of the parameter $\lambda$ is to stretch uniformly the duration of the interaction, including the switching times, during which the interaction between the field and the detector may vary in strength. This also means that the switching on and off are done slowly. 

We should mention that other choices of scaling for the switching are possible, such as, for example, the so-called plateau scaling in \cite{BL}. In this case, the switching scales leaving the tails fixed, i.e., the interaction is switched on, the system interacts constantly for a long time, and then switch off. In \cite{BL}, it was shown that for the Unruh effect, the thermalisation of the response function cannot be polynomial in time at large frequency gap if a plateau scaling is chosen. For this reason, the adiabatic scaling that we discuss here is more interesting. 

Corresponding with the scaling of the switching functions, we obtain a one-parameter family of density operators, which we denote, $\hat \rho^{(\lambda)}$ , which in the Born-Markov approximation satisfies the dynamical equations
\begin{subequations}
\label{mrij-Lambda}
\begin{align}\label{mr11-Lambda}
\dot{\grr}^{(\lambda)}_{11}(\grt) =&-c^2 \grr^{(\lambda)}_{11}( \grt)\int_0^{\infty}d\grt'\chi_\lambda(\grt)\chi_\lambda(\grt')\left[e^{\ii\grw (\grt-\grt') }\mathcal{W}_{\beta}^{+}(\grt-\grt')+e^{-\ii\grw (\grt-\grt')}\mathcal{W}_{\beta}^{-}(\grt-\grt')\right]
\nonumber \\
&+ c^2 \grr^{(\lambda)}_{00}( \grt)\int_0^{\infty}d\grt'\chi_\lambda(\grt)\chi_\lambda(\grt')\left[e^{-\ii\grw (\grt-\grt')}\mathcal{W}_{\beta}^+(\grt-\grt')+e^{\ii\grw (\grt-\grt')}\mathcal{W}_{\beta}^-(\grt-\grt')\right],
\end{align}
\begin{align}\label{mr00-Lambda}
\dot{\grr}^{(\lambda)}_{00}(\grt) =&-c^2 \grr^{(\lambda)}_{00}( \grt)\int_0^{\infty}d\grt'\chi_\lambda(\grt)\chi_\lambda(\grt')\left[e^{-\ii\grw (\grt-\grt') }\mathcal{W}_{\beta}^{+}(\grt-\grt')+e^{\ii\grw (\grt-\grt')}\mathcal{W}_{\beta}^{-}(\grt-\grt')\right]
\nonumber \\
&+ c^2 \grr^{(\lambda)}_{11}( \grt)\int_0^{\infty}d\grt'\chi_\lambda( \grt)\chi_\lambda(\grt')\left[e^{\ii\grw (\grt-\grt')}\mathcal{W}_{\beta}^{+}(\grt-\grt')+e^{-\ii\grw (\grt-\grt')}\mathcal{W}_{\beta}^{-}(\grt-\grt')\right],
\end{align}
\begin{align}\label{mr01-Lambda}
\dot{\grr}^{(\lambda)}_{01}( \grt) =&-c^2 \grr^{(\lambda)}_{01}( \grt)\int_0^{\infty}d\grt'\chi_\lambda(\grt)\chi_\lambda(\grt')\left[e^{-\ii\grw (\grt-\grt') }\mathcal{W}_{\beta}^{+}(\grt-\grt')+e^{-\ii\grw (\grt-\grt')}\mathcal{W}_{\beta}^{-}(\grt-\grt')\right]
\nonumber \\
&+ c^2 \grr^{(\lambda)}_{10}( \grt)e^{-2\ii\grw\grt}\int_0^{\infty}d\grt'\chi_\lambda(\grt)\chi_\lambda(\grt')\left[e^{\ii\grw (\grt-\grt')}\mathcal{W}_{\beta}^{+}(\grt-\grt')+e^{\ii\grw (\grt-\grt')}\mathcal{W}_{\beta}^{-}(\grt-\grt')\right].
\end{align}
\end{subequations}

\subsection{Asymptotic limit, $\lambda \to \infty$}

Since we are interested in equilibrium states as the interaction duration becomes long, we study the limits $\hat{\rho}^{(\infty)} = \lim_{\lambda \to \infty} \hat{\rho}^{(\lambda)}$ in Eq.\eqref{mrij-Lambda}. This can be done as follows. Consider the integral
\be
\Lambda^\pm_{\grl}(\pm \omega, \grt)= \int_{-\infty}^{\infty}\dd\grt'\chi_{\grl}(\grt)\chi_{\grl}(\grt')e^{\pm \ii \grw (\grt-\grt') }\mathcal{W}_{\beta}^{\pm}(\grt-\grt'),
\ee
which appears in Eq. \eqref{mrij-Lambda}, where we have extended the integration limits due to the support properties of $\chi$,  where we have assumed that ${\rm supp}(\chi) \subset \mathbb{R}^+_0$, and hence that $\chi$ vanishes for negative argument. We know that the Fourier transform of $\chi$ exists and decays faster than any polynomial due to the Cauchy-Paley-Wiener theorem, hence we can write
\begin{align}
\Lambda^\pm_{\grl}(\pm \omega, \grt) & = \frac{1}{(2 \pi)^2} \int_{-\infty}^\infty \! \dd \omega' \int_{-\infty}^\infty \! \dd \omega'' \widehat{\chi}(\omega') \widehat{\chi}(-\omega'') \int_{-\infty}^{\infty} \! \dd\grt'  \expe^{\ii \omega' \tau/\lambda - \ii \omega'' \tau'/\lambda} \expe^{\pm \ii \grw (\grt-\grt') }\mathcal{W}_{\beta}^{\pm}(\grt-\grt') \nonumber \\
& = \frac{1}{(2 \pi)^2} \int_{-\infty}^\infty \! \dd \omega' \int_{-\infty}^\infty \! \dd \omega'' \widehat{\chi}(\omega') \widehat{\chi}(-\omega'') \int_{-\infty}^{\infty} \! \dd s \expe^{\ii(\omega' - \omega'') \tau/\lambda} \expe^{\ii \omega'' s/\lambda} \expe^{\pm \ii \omega s} \mathcal{W}_{\beta}^{\pm}(s),
\end{align}
where we have performed the change of variables $s = \tau - \tau'$, and denoted the Fourier transform by a wide caret, $\widehat{\chi} = \mathcal{F}[\chi]$, cf. Sec. \ref{Notation}. We assume that the Fourier transform of the pulled-back ``positive" and ``negative frequency" Wightman functions, $\widehat{\mathcal{W}}_\beta^\pm$, exist and satisfy the requirements of Prop. \ref{PropDBC} (see also the remarks below Prop. \ref{PropDBC}) and further write
\begin{align}
\Lambda^\pm_{\grl}(\pm \omega, \grt) &  = \int_{-\infty}^\infty \!\! \frac{\dd \omega'}{(2 \pi)^3} \int_{-\infty}^\infty \!\!\!\! \dd \omega'' \int_{-\infty}^\infty \!\!\!\! \dd \omega''' \widehat{\chi}(\omega') \widehat{\chi}(-\omega'') \widehat{\mathcal{W}}_{\beta}^{\pm}(\omega''') \int_{-\infty}^{\infty} \!\!\!\! \dd s \expe^{\ii(\omega' - \omega'') \tau/\lambda} \expe^{\ii \left( \frac{\omega''}{\lambda} \pm \omega + \omega'''\right)  s} ,
\end{align}
which yields, using the integral representation of the delta function,
\begin{align}
\Lambda^\pm_{\grl}(\pm \omega, \grt) &  = \frac{1}{(2 \pi)^2} \int_{-\infty}^\infty \!\!\! \dd \omega' \int_{-\infty}^\infty \!\!\! \dd \omega'' \widehat{\chi}(\omega') \widehat{\chi}(-\omega'') \widehat{\mathcal{W}}_{\beta}^{\pm}(\mp \omega  - \omega''/\lambda)  \expe^{\ii(\omega' - \omega'') \tau/\lambda}.
\label{Lambdapm}
\end{align}

We now seek to compute $\Lambda^\pm_{\infty}(\omega, \grt) = \lim_{\lambda \to \infty} \Lambda^\pm_{\grl}(\omega, \grt)$ by taking the limit inside the integrals on the right-hand side of Eq. \eqref{Lambdapm}. To this end, define
\be
F_{\grl}(\pm  \grw, \tau)=\int_{-\infty}^{\infty}d\grw''\widehat{\chi}(-\grw'')e^{-\ii \grw''\grt/\grl}\widehat{\mathcal{W}}_{\beta}^\pm\left(\mp \grw-\frac{\grw''}{\grl}\right),
\label{Fdef}
\ee
whereby
\be\label{Lint}
\Lambda^\pm_{\grl}(\pm \omega, \grt)= \frac{1}{(2 \pi)^2}\int_{-\infty}^{\infty}d\grw'\widehat{\chi}(\grw')e^{\ii\grw'\grt/\grl}F_{\grl}(\pm \grw, \tau).
\ee

The strategy is to apply the dominated convergence theorem. This step first entails finding an integrable, $\lambda$-independent bound for $F_\lambda(\pm \omega,\tau)$. Indeed, 
\be
|F_{\grl}(\pm \grw, \tau)|\leq \int_{-\infty}^{\infty}d\grw''|\widehat{\chi}(-\grw'')|\left(A+B\left(|\grw|+|\grw''|\right)\right)^n = G(\omega),
\label{Fbound}
\ee
where $A,B\in\mathbb{R}$ are positive and $n\in\mathbb{Z}$, by the polynomial bound of $\widehat{\mathcal{W}}_\beta^\pm$. The integrand is integrable due to the rapid decay of $\widehat{\chi}$ and yields a $\lambda$-independent and $\omega'$-independent function of $\omega$, which we called $G$. Since $\widehat{\chi}(\grw')e^{\ii\grw'\grt/\grl} \leq |\widehat{\chi}(\grw')|$, we have that the integrand on the right-hand side of Eq. \eqref{Lint} is bounded by a $\lambda$-independent integrable function, $G(\omega)|\widehat{\chi}(\grw')|$, from where it follows that
\be
\Lambda_{\infty}^{\pm}(\pm \omega, \grt)= \frac{1}{(2 \pi)^2}\int_{-\infty}^{\infty}d\grw' \widehat{\chi}(\grw') \lim_{\lambda \to \infty} F_{ \grl}(\pm  \grw, \tau),
\ee
but we have seen that the integrand of the right-hand side of Eq. \eqref{Fdef} is bounded by a $\lambda$-independent integrable function cf. the middle expression of \eqref{Fbound}, so we obtain that the limit can be taken inside the integrals, and
\begin{equation}
\Lambda^\pm_{\infty}(\pm \omega, \grt) = \frac{1}{(2 \pi)^2} \widehat{\mathcal{W}}_{\beta}^{\pm}(\mp \omega) || \widehat{\chi} ||^2_{L^1},
\end{equation}
where the $L^1$-norm over the reals is defined in the usual way as $|| \widehat{\chi} ||_{L^1} = \int_{-\infty}^{\infty}d\grw' \widehat{\chi}(\omega')$.

Equation \eqref{mrij-Lambda} then reads as $\lambda \to \infty$ as
\begin{subequations}
\label{mrij-infty}
\begin{align}
\dot{\grr}^{(\infty)}_{11}(\grt) =& \frac{c^2 || \widehat{\chi} ||^2_{L^1}}{(2\pi)^2}  \left[ - \grr^{(\infty)}_{11} ( \grt) \left( \widehat{\mathcal{W}}_{\beta}^{+}(- \omega) + \widehat{\mathcal{W}}_{\beta}^{-}(\omega) \right)
+ \grr^{(\infty)}_{00}(\tau) \left(\widehat{\mathcal{W}}_{\beta}^{+}(\omega) + \widehat{\mathcal{W}}_{\beta}^{-}(- \omega) \right) 
\right], \label{mr11-infty} \\
\dot{\grr}^{(\infty)}_{00}(\grt) =&\frac{c^2 || \widehat{\chi} ||^2_{L^1}}{(2\pi)^2}  \left[ + \grr^{(\infty)}_{11} ( \grt) \left( \widehat{\mathcal{W}}_{\beta}^{+}(- \omega) + \widehat{\mathcal{W}}_{\beta}^{-}(\omega) \right)
- \grr^{(\infty)}_{00}(\tau) \left(\widehat{\mathcal{W}}_{\beta}^{+}(\omega) + \widehat{\mathcal{W}}_{\beta}^{-}(- \omega) \right) 
\right], \label{mr00-infty}  \\
\dot{\grr}^{(\infty)}_{01}( \grt) = &\frac{c^2 || \widehat{\chi} ||^2_{L^1}}{(2\pi)^2} 
\left[- \grr^{(\infty)}_{01} (\tau) \left(\widehat{\mathcal{W}}_{\beta}^{+}(\omega) + \widehat{\mathcal{W}}_{\beta}^{-}(\omega) \right) +  \grr^{(\infty)}_{10} (\tau)\expe^{-2 \ii \omega \tau} \left(\widehat{\mathcal{W}}_{\beta}^{+}(- \omega) + \widehat{\mathcal{W}}_{\beta}^{-}(- \omega) \right) \right]. \label{mr01-infty}
\end{align}
\end{subequations}

\subsection{Stationary states as $\lambda \to \infty$}
\label{sec:stat}

According to Appendix \ref{app:Stat}, in the long interaction time limit the detector relaxes to a steady state,  
\be
\lim_{\lambda\to\infty}\frac{d\hat{\rho}^{(\lambda)} (\tau)}{d\tau}=0,
\ee
which reduces the system of Eq. \eqref{mrij-infty} to algebraic relations that can be solved for the components of $\hat \rho^{(\infty)}$. Using Eq. \eqref{mr01-infty} and its complex conjugate, it can be seen that the off-diagonal coefficients of $\hat \rho^{(\infty)}$ must vanish, in agreement with Appendix \ref{app:Stat}, while Eq. \eqref{mr11-infty} and \eqref{mr00-infty} yield a relation between $\hat \rho_{00}^{(\infty)}$ and $\hat \rho_{11}^{(\infty)}$ as follows,
\begin{subequations}
\begin{align} 
\rho_{01}^{(\infty)}& =0, \label{Solve01}\\
\grr_{11}^{(\infty)}\left(\widehat{\mathcal{W}}^+(-\grw)+\widehat{\mathcal{W}}^-(\grw)\right)-
\grr_{00}^{(\infty)}\left(\widehat{\mathcal{W}}^+(\grw)+\widehat{\mathcal{W}}^-(-\grw)\right) & = 0 \label{Solve0011}
\end{align}
\end{subequations}

Note that Eq. \eqref{Solve01} is expected; the off-diagonal elements of the density matrix vanish since interaction with the field environment leads to the decoherence of any initial state of the detector. In order to solve Eq. \eqref{Solve0011} we use the following lemma:

\begin{lemma}
\label{Lem}
Let $\mathcal{W}_\mathbb{C}$ be the complex-time two-point function associated to the worldline-pullback two-point function $\mathcal{W}^+_\beta$ at inverse temperature $\beta$, such that $\mathcal{W}_\mathbb{C}$ and $P(\omega)$ are as in Prop. \ref{PropDBC} and $\widehat{\mathcal{W}}_\mathbb{C} \leq P(\omega)$ its Fourier transform satisfying the detailed balance condition \eqref{DBC}. If $\widetilde{\mathcal{W}}_\mathbb{C}$ is the complex-time two-point function associated to the worldline-pullback two-point function $\mathcal{W}^-_\beta$, then its Fourier transform, $\widehat{\widetilde{\mathcal{W}}}_\mathbb{C}$, satisfies the following ``conjugate" detailed balance condition
\begin{equation}
\widehat{\widetilde{\mathcal{W}}}_\mathbb{C}(\omega) = \expe^{\beta \omega} \widehat{\widetilde{\mathcal{W}}}_\mathbb{C}(-\omega)
\end{equation}
\end{lemma}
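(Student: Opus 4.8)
The plan is to reduce the statement to the detailed balance condition \eqref{DBC} for $\widehat{\mathcal{W}}_\mathbb{C}$ by means of the elementary reflection relation between the ``positive'' and ``negative mixture'' worldline Wightman functions. First I would record that, along a stationary worldline, $\mathcal{W}^-_\beta(s) = \mathcal{W}^+_\beta(-s)$ as distributions on $\mathbb{R}$. This follows from the defining identity $\mathcal{W}^-_\beta((t,x),(t',x')) = \langle \Omega_\beta | \hat\Phi(t',x') \hat\Phi(t,x) | \Omega_\beta \rangle = \mathcal{W}^+_\beta((t',x'),(t,x))$ together with the translation invariance of the pullback established by Letaw \cite{Letaw}; it can equally well be read off directly from the explicit mode sum \eqref{WigthBeta}. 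Consequently the complex-time two-point function associated to $\mathcal{W}^-_\beta$ is the reflection of the one associated to $\mathcal{W}^+_\beta$: if $\mathcal{W}_\mathbb{C}$ is holomorphic in $S = \{-\beta < \Im z < 0\}$ with the boundary-value and polynomial-growth properties of Prop.~\ref{PropDBC}, then $z \mapsto \mathcal{W}_\mathbb{C}(-z)$ is holomorphic in $\{0 < \Im z < \beta\}$, has boundary value $\mathcal{W}^-_\beta$ on the real axis, and inherits the polynomial bounds under $z \mapsto -z$, so that $\widetilde{\mathcal{W}}_\mathbb{C}(z) = \mathcal{W}_\mathbb{C}(-z)$.

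Next I would compute the Fourier transform. With the convention $\widehat f(\omega) = \int_\mathbb{R}\! \dd s\, \expe^{-\ii \omega s} f(s)$, the change of variable $u = -s$ gives $\widehat{\widetilde{\mathcal{W}}}_\mathbb{C}(\omega) = \int_\mathbb{R}\! \dd s\, \expe^{-\ii\omega s}\mathcal{W}_\mathbb{C}(-s) = \widehat{\mathcal{W}}_\mathbb{C}(-\omega)$, interpreted in the tempered-distribution sense, which is legitimate since $\mathcal{W}_\mathbb{C}$ is tempered by the Bochner--Schwartz theorem as noted below Prop.~\ref{PropDBC}. In particular, evaluating at $-\omega$, $\widehat{\widetilde{\mathcal{W}}}_\mathbb{C}(-\omega) = \widehat{\mathcal{W}}_\mathbb{C}(\omega)$. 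Feeding in the detailed balance condition $\widehat{\mathcal{W}}_\mathbb{C}(-\omega) = \expe^{\beta\omega}\widehat{\mathcal{W}}_\mathbb{C}(\omega)$ from \eqref{DBC} then yields $\widehat{\widetilde{\mathcal{W}}}_\mathbb{C}(\omega) = \widehat{\mathcal{W}}_\mathbb{C}(-\omega) = \expe^{\beta\omega}\widehat{\mathcal{W}}_\mathbb{C}(\omega) = \expe^{\beta\omega}\widehat{\widetilde{\mathcal{W}}}_\mathbb{C}(-\omega)$, which is exactly the claimed conjugate detailed balance condition.

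The computation itself is a one-liner; the work is in the functional-analytic bookkeeping, and that is where I expect the only real obstacle to lie: one must verify carefully that $z \mapsto \mathcal{W}_\mathbb{C}(-z)$ really is the complex-time function \emph{canonically} associated to $\mathcal{W}^-_\beta$ — in particular that the analyticity strips match up as emphasised in the text ($-\beta < \Im z < 0$ for the one built from $\mathcal{W}^+_\beta$ versus $0 < \Im z < \beta$ for the one built from $\mathcal{W}^-_\beta$), that the distributional boundary values are the correct ones, and that all four hypotheses of Prop.~\ref{PropDBC}, including the polynomial bound on $\widehat{\widetilde{\mathcal{W}}}_\mathbb{C}$, transfer under the reflection. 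A slightly more self-contained variant, which sidesteps explicitly identifying $\widetilde{\mathcal{W}}_\mathbb{C}$ with a reflection of $\mathcal{W}_\mathbb{C}$, is to observe that $\widetilde{\mathcal{W}}_\mathbb{C}$ is itself the complex-time function of a thermal two-point function and hence obeys the KMS condition with the two boundary values interchanged relative to \eqref{KMSCondition}; running the argument of Prop.~4.3 of \cite{BL} with the roles of $s$ and $-s$ swapped then produces $\widehat{\widetilde{\mathcal{W}}}_\mathbb{C}(\omega) = \expe^{\beta\omega}\widehat{\widetilde{\mathcal{W}}}_\mathbb{C}(-\omega)$ directly. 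Either route gives the result.
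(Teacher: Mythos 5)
Your proof is correct, and it reaches the conclusion by essentially the same one‑line reduction the paper has in mind, but via a different elementary identity. The paper's proof invokes Hermiticity of the two‑point function, $\mathcal{W}^+_\beta = \overline{\mathcal{W}^-_\beta}$, together with Prop.~\ref{PropDBC}: conjugation turns the Fourier transform into $\widehat{\mathcal{W}}^-_\beta(\omega) = \overline{\widehat{\mathcal{W}}^+_\beta(-\omega)}$, and since $\widehat{\mathcal{W}}^+_\beta$ is real (it is a positive measure by Bochner--Schwartz), this again gives $\widehat{\mathcal{W}}^-_\beta(\omega) = \widehat{\mathcal{W}}^+_\beta(-\omega)$, after which the detailed balance condition \eqref{DBC} finishes the argument exactly as in your last display. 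You instead use the exchange/stationarity identity $\mathcal{W}^-_\beta(s) = \mathcal{W}^+_\beta(-s)$, which delivers $\widehat{\widetilde{\mathcal{W}}}_\mathbb{C}(\omega) = \widehat{\mathcal{W}}_\mathbb{C}(-\omega)$ directly without needing the reality of the Fourier transform; the trade‑off is that your route explicitly leans on translation invariance of the pullback (i.e.\ stationarity of the worldline), whereas the paper's leans on positivity of the state. The two identities are of course two facets of the same structure (together they encode $\mathcal{W}^+_\beta(-s) = \overline{\mathcal{W}^+_\beta(s)}$, i.e.\ reality of $\widehat{\mathcal{W}}^+_\beta$), and in the setting of the lemma both hypotheses are available, so either derivation is valid. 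Your closing caveats about matching the analyticity strips and transferring the polynomial bounds are well placed but not obstacles: the reflection $z \mapsto -z$ maps the strip $-\beta < \Im z < 0$ onto $0 < \Im z < \beta$, which is precisely the strip the paper assigns to $\mathscr{W}^-_\beta$, and the bound of condition~4 of Def.~\ref{KMSdef} depends only on $|\Re z|$ and so is reflection‑invariant.
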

\begin{proof}
The proof follows from the relation $\mathcal{W}^+_\beta = \overline{\mathcal{W}^-_\beta}$ and Prop. \ref{PropDBC}.
\end{proof}

From Lemma \ref{Lem}, Eq. \eqref{Solve0011} becomes
\begin{align}
\left( \grr_{11}^{(\infty)} -
\expe^{-\beta \omega} \grr_{00}^{(\infty)} \right)\left(\widehat{\mathcal{W}}^+(-\grw)+\widehat{\mathcal{W}}^-(\grw)\right)= 0,
\end{align}
which implies that $\grr_{00}^{(\infty)} = \expe^{\beta \omega} \grr_{11}^{(\infty)}$. Introducing the normalisation $\grr_{00}^{(\infty)} + \grr_{11}^{(\infty)} = 1$, we have that
\begin{align}
    \hat \rho^{(\infty)}
       = \frac{1}{1 + \expe^{-\beta \omega}} \begin{pmatrix} 
       1 & 0\\ 0 & \expe^{-\beta \omega} \\
       \end{pmatrix}       
\label{rhoinfty}
\end{align}
from where we conclude that the density operator elements of the detector state are in a Gibbs equilibrium state at inverse temperature $\beta$,
\be
\hat{\rho}^{(\infty)}=\frac{e^{-\beta \hat{H}_{\rm D}}}{{\rm tr} \left( e^{-\beta \hat{H}_{\rm D}} \right)}.
\ee

 Thus, a detector interacting with a quantum field that is in a thermal KMS state for a long time will eventually reach an equilibrium state at temperature $T = 1/\beta$, independently of the details of the switching of the interaction.

\subsection{Frequency-dependent effective temperature}

Throughout the discussion of thermal states in Sec. \ref{sec:KMS} we have assumed that the inverse temperature, $\beta$, is constant. Our arguments leading to stationarity, however, do not rely at any point on $\beta$ being constant. It is straightforward, therefore, to allow $\beta$ to be a function $\omega \mapsto \beta(\omega)$. While some might be led to argue that the equivalence between the detailed balance condition and the KMS condition is not evident in this case, we wish to stress that an operational meaning can be attached to such frequency-dependent function $\beta(\omega)$ in the sense that a stationary detector does reach equilibrium at late times, regardless of whether $\beta$ is constant in the frequency spectrum. To such achieved equilibrium, one can associate an effective temperature, such that the following detailed balance conditions hold
\begin{equation}
\widehat{\mathcal{W}}_{\beta}^{+}(-\omega) = \expe^{\beta(\omega) \omega} \widehat{\mathcal{W}}_{\beta}^{+}(\omega) \text{ \, and \, } \widehat{\mathcal{W}}_{\beta}^{-}(\omega) = \expe^{\beta(\omega) \omega} \widehat{\mathcal{W}}_{\beta}^{-}(-\omega).
\label{DBCeff}
\end{equation}

 Conditions \eqref{DBCeff} are rather mild, for provided that the ratio of $\widehat{\mathcal{W}}(\omega)$ and $\widehat{\mathcal{W}}(-\omega)$ be positive, one can define the frequency-dependent temperature by solving for $\beta(\omega)$ in any of the relations appearing in \eqref{DBCeff}. But that the aforementioned ratio is positive is not a stringent requirement by the positivity of the state and the Bochner-Schwarz theorem, as we have discussed in the final paragraph of Sec. \ref{sec:KMS}. That the two relations appearing in \eqref{DBCeff} hold simultaneously follows from applying Lemma \ref{Lem} pointwise in $\omega$.

Indeed, a constant temperature will be achieved if (but not only if) the stationary trajectory of the detector is along the worldlines generated by the Killing vector field that also defines the positive and negative frequencies of the quantum field theory, but this need not be the case for every stationary trajectory that the detector can follow. 

For example, in Minkowski spacetime when the field is in the Minkowski vacuum, i.e., with a notion of positive frequency with respect to timelike translations, in addition to (i) inertial motion (with timelike translations as associated Killing vectors), one has the following stationary trajectories that do not coincide with the notion of positive frequency of the state: (ii) linear uniform acceleration (with boosts as associated Killing vectors), (iii) circular motion (linear combination of translation and rotation) , (iv) cusped motion (translation and null rotation), (v) motion generated by a linear combination of boosts with spatial translation and (vi) motion generated by a linear combination of boosts with rotation \cite{Letaw} (see also \cite{Louko:2006zv}).

Of the six stationary trajectories described above, it is known that detectors following trajectories (i) and (ii) will respond at a constant temperature, resp. (i) zero (in a limiting sense) and (ii) positive (the Unruh effect). In cases (iii) and (iv), there are arguments that a detector will thermalise at a frequency-dependent effective temperature \cite{Bell:Leinaas, Unruh:circ}. By our general arguments in Sec. \ref{sec:stat}, such effective temperatures can also be associated to detectors along trajectories (v) and (vi), provided that Eq. \eqref{DBCeff} hold.


\section{Applications}\label{apps}

In this section, we give a few examples of detector spacetime trajectories that lead to the thermalisation (at constant or frequency-dependent temperature) of the detector interacting with a scalar field in a fixed background with an interaction Hamiltonian of the form \eqref{Hint} with $\chi$ replaced by $\chi_\lambda$, and in the long-time limit $\lambda \to \infty$. From our results in Sec. \ref{Sec:Equilibrium}, it suffices that the pullback of the two-point function satisfy the (generalised) detailed balance condition for the detector to reach a thermal equilibrium state as $\lambda \to \infty$.

\subsection{The Unruh and Hawking effects}

\subsubsection*{The Unruh effect}

It is well-known that the Minkowski vacuum restricts to a thermal state on the right (or left) Rindler wedge. Consider a linearly, uniformly accelerated detector with trajectory parametrised by its proper time $\tau$ as
\begin{align}
{\rm x}(\grt)=\left(a^{-1}\sinh(a\grt), \, a^{-1}\cosh(a\grt),\, 0, \, 0\right),
\label{Accel}
\end{align}
where $a$ is the proper acceleration. The detector interacts with a massless scalar Klein-Gordon field. Then, the pullback of the Minkowski Wightman functions to the detector trajectory \eqref{Accel} are given by
\be\label{Wacc}
\mathcal{W}^{\pm}(\grt -\grt')= -\lim_{\epsilon \rightarrow 0^+} \frac{a^2}{16\pi^2\sinh^2[a(\grt-\grt' \mp i\gre)/2]}.
\ee
Their Fourier transforms of $\mathcal{W}^{\pm}$ are given by
\be
\widehat{\mathcal{W}}^+(\grw)=\frac{\grw}{2\pi}\frac{1}{e^{\frac{2\pi\grw}{a}}-1}, \quad \widehat{\mathcal{W}}^-(\grw)=\frac{\grw}{2\pi}\frac{e^{\frac{2\pi\grw}{a}}}{e^{\frac{2\pi\grw}{a}}-1}.
\ee
and satisfy the detailed balance condition at the Unruh temperature, $T_{\rm U} = a/(2\pi)$. It follows from the results of Sec. \ref{Sec:Equilibrium}, and Eq. \eqref{rhoinfty} in particular, that as the interaction between the detector and the field becomes long, $\lambda \to \infty$, we have that 
\be
\hat{\rho}^{(\infty)} = \frac{1}{1 + \expe^{-2 \pi \omega/a} }\left( \begin{array}{cc} 1 & 0 \\0 &e^{-\frac{2\pi\grw}{a}} \end{array} \right), \label{asympt}
\ee
which is a Gibbs  state at temperature $T_{\rm U} =a/(2\pi)$. This is the Unruh effect: A uniformly accelerated detector perceives the field vacuum state as a thermal state. As a result, the detector relaxes to an equilibrium Gibbs state at the Unruh temperature, which is an asymptotic solution of the master equation for the reduced density matrix of the detector. See e.g. \cite{DM,DMCA,BL,DeBievre,Benatti}.

\subsubsection*{The Hawking effect}

Consider now an eternal Schwarzschild black hole, which is an idealised situation for a primordial black hole with zero angular momentum, and a detector following a constant radial trajectory, say $r = R > 2M$, interacting with a Klein-Gordon field in the Hartle-Hawking-Israel (HHI) state.

It is well known that the HHI state restricts to a thermal state in the exterior region (region I) of the Kruskal maximal analytic extension of Schwarzschild spacetime \cite{Kay:1988mu, Sanders:2013vza}, where the metric tensor takes the form $g = -(1-2M/r) dt^2 + (1-2M/r)^{-1} dr + r^2 d\Omega^2$, from where it is clear that the exterior region is static with Killing vector field $\xi = \partial_t$. The radially constant trajectory $r = R > 2M$ is an orbit of the Killing vector field $\xi$ in the exterior region, and is hence a stationary orbit. Without explicitly knowing the precise details of the two-point function, we nevertheless know from these considerations that along the detector's worldline, the following detailed balance condition holds
\begin{equation}
\widehat{\mathcal{W}}^+(-\grw) = \expe^{\omega/ T_{\rm loc}} \widehat{\mathcal{W}}^+(\grw), \hspace{1cm} T_{\rm loc} =  \frac{1}{8 \pi M (1-2M/r)^{1/2}}.
\end{equation}

If $\widehat{\mathcal{W}}(\omega) \leq P(\omega)$, where $P(\omega)$ is a polynomial function (and the Fourier transform is not simply a polynomially bounded positive {\it measure}), it follows from our results in Sec. \ref{Sec:Equilibrium}, cf. Eq. \eqref{rhoinfty}, that as $\lambda \to \infty$,
\begin{align}
    \hat \rho^{(\infty)}
       = \frac{1}{1 + \expe^{-\omega/T_{\rm loc}}} \begin{pmatrix} 
       1 & 0\\ 0 & \expe^{-\omega/T_{\rm loc}} \\
       \end{pmatrix}.     
\end{align}

\subsection{Cusped motion}

Consider a detector interacting with a massless Klein-Gordon field in the Minkowski vacuum that follows the stationary, cusped trajectory \cite{Letaw}
\begin{align}
x^{\mu}(\tau)=\left(\tau+\frac{1}{6}a^2\tau^3,\, \frac{1}{2}a\tau^2,\, \frac{1}{6}a^2\tau^3,0\right).
\label{cusp}
\end{align}

The pullback of the Wightman functions to the trajectory are
\begin{align}
\mathcal{W}^{\pm}_{\rm cusp}(\tau-\tau')=-\frac{1}{4\pi^2}\lim_{\epsilon\to 0}\frac{1}{(\tau-\tau'\mp i\epsilon)^2+\frac{a^2}{12}(\tau-\tau')^4},
\end{align}
which can be written as 
\begin{align}\label{Wcusp}
\mathcal{W}^{\pm}_{\rm cusp}(\tau-\tau')=-\frac{1}{4\pi^2}\lim_{\epsilon\to 0}\left\{\frac{1}{(\tau-\tau'\mp i\epsilon)^2}-\frac{a^2}{12+a^2(\tau-\tau')^2}\right\},
\end{align}
from where their Fourier transform can be evaluated using standard complex-analytic techniques, yielding
\begin{equation}
\widehat{\mathcal{W}}^{\pm}_{\rm cusp}(\omega) = -\frac{\omega}{2 \pi} \Theta(\mp \omega) + \frac{a}{4 (12)^{1/2} \pi} \exp\left( - (12)^{1/2} |\omega|/a \right).
\label{CuspFourier}
\end{equation}

We should stress that some results reported in the literature (e.g. \cite{Letaw, Good:2017xzl, Rad:Singl}) for a detector following the cusped trajectory \eqref{cusp} would seem to suggest that the detector becomes thermalised at a temperature $a/(12)^{1/2}$. However, our result \eqref{CuspFourier}, which is in agreement with \cite[Eq. (5.13)]{Louko:2006zv}, suggests that as $\lambda \to \infty$, the density matrix for the detector becomes
\begin{align}
    \hat \rho^{(\infty)}
       = \frac{1}{1 + \expe^{-\omega/T_{\rm eff}(\omega)}} \begin{pmatrix} 
       1 & 0\\ 0 & \expe^{-\omega/T_{\rm eff}(\omega)} \\
       \end{pmatrix},     
\end{align}
where the frequency-dependent effective temperature, $T_{\rm eff}(\omega)$, is given by
\begin{equation}\label{Teff}
T^{-1}_{\rm eff}(\omega) = \beta(\omega) = \frac{1}{\omega} \log \left( \frac{\frac{\omega}{2 \pi} \Theta(\omega) + \frac{a}{4 (12)^{1/2} \pi} \exp\left( - (12)^{1/2} |\omega|/a \right)}{-\frac{\omega}{2 \pi} \Theta(-\omega) + \frac{a}{4 (12)^{1/2} \pi} \exp\left( - (12)^{1/2} |\omega|/a \right)} \right).
\end{equation}
Equation \eqref{Teff} is plotted in Fig. \ref{Teff:plot}.

\begin{figure}
\centering
\includegraphics[width=0.7\linewidth]{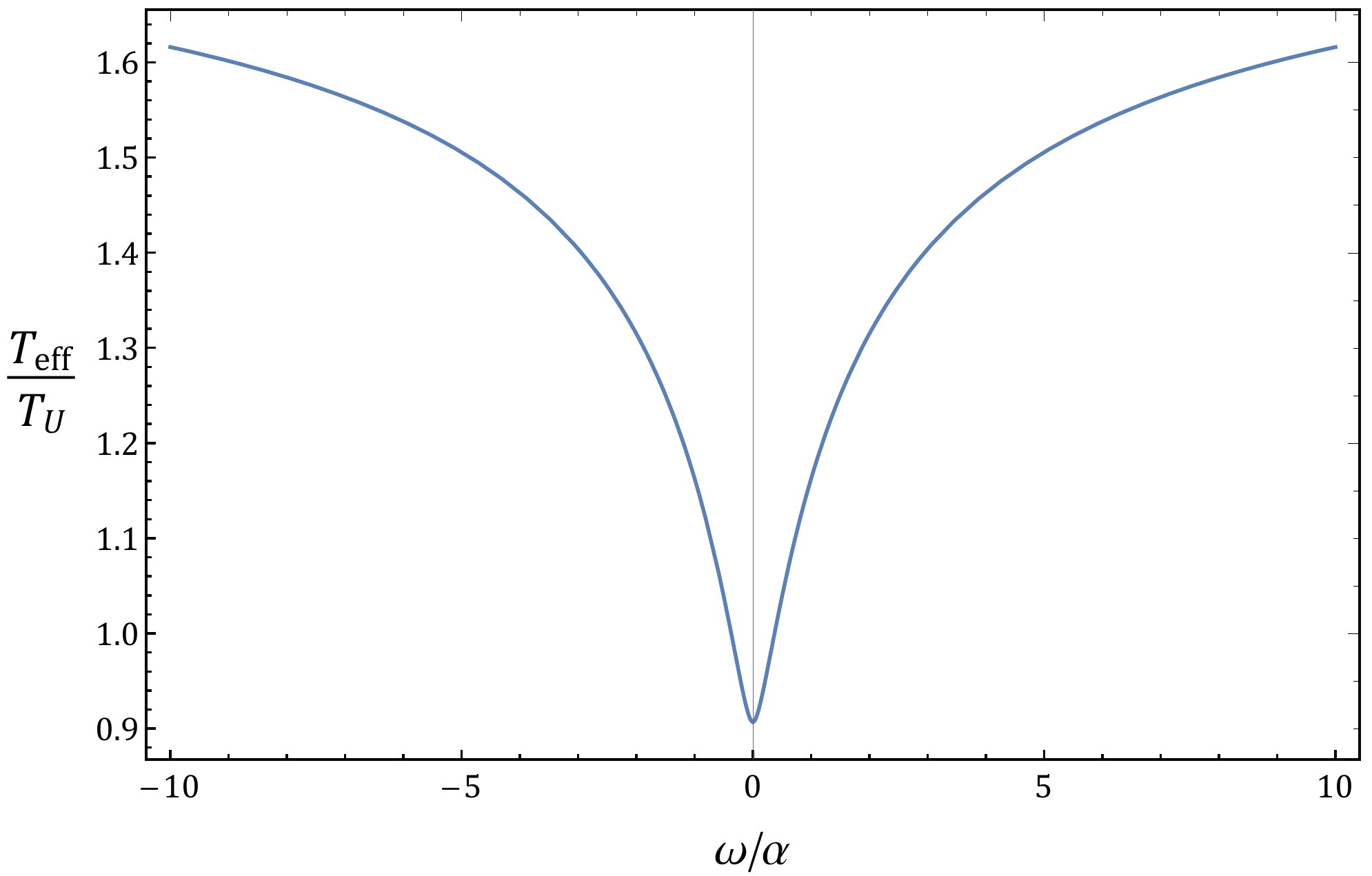}
\caption{The effective temperature \eqref{Teff} scaled to the Unruh temperature $T_U$ plotted against $\omega/a$ for fixed $a > 0$. Notice that for $\omega \ll a$ the Unruh temperature is higher, while for $\omega \gg a$ it is the effective temperature along the cusped motion that dominates.}\label{Teff:plot}
\end{figure}

\subsection{Circular motion}

We consider a detector following the circular trajectory \cite{LePf,Letaw}
\begin{align}
x^{\mu}(\tau)=\left(\gamma \tau,\, r\cos(\gamma \Omega \tau),\, r\sin(\gamma \Omega \tau) ,0\right),
\end{align}
in Minkowski spacetime, where  $\Omega$ is the angular velocity of the detector, $r$ is the radius of the circular orbit, $\gamma=1/\sqrt{1-\upsilon^2}$ is the Lorentz factor and $\upsilon=\Omega r$  is the velocity of the detector. 

The pullback of the Minkowski vacuum Wightman functions of a massless scalar field to the circular trajectory are 
\begin{align}\label{Wight:circ}
\mathcal{W}^{\pm}_{\rm circ}(\tau-\tau')=-\frac{1}{4\pi^2}\lim_{\epsilon\to 0}\frac{1}{\left(\gamma(\tau-\tau')\mp i\epsilon\right)^2-4r^2\sin^2\left[\gamma\Omega (\tau-\tau')/2\right]}.
\end{align}
As before, we seek to associate the effective temperature 
\begin{align}\label{T:circ}
T^{-1}_{\rm eff}(\omega) =  \frac{1}{\omega} \log \left(\frac{\widehat{\mathcal{W}}^+_{\rm circ}(-\omega)}{\widehat{\mathcal{W}}^+_{\rm circ}(\omega)}\right),
\end{align}
at which the circularly moving detector thermalises as $\lambda\to\infty$. While we do not evaluate the Fourier transform of the Wightman functions \eqref{Wight:circ} analytically, we present numerical results of the effective temperature \eqref{T:circ} in Fig \ref{Teff:circ}. 

We should mention that obtaining an analytic expression for this Fourier transform, and hence for the effective frequency dependent temperature in the circular motion case, amounts to summing the residues of the integrand defining the Fourier transform of $\mathcal{W}^{\pm}_{\rm circ}(z)$,  Eq. \eqref{Wight:circ}, as an integral over the complex $z$-plane. It can be seen that the denominator on the right-hand side of Eq. \eqref{Wight:circ} contains infinitely many roots on the complex plane, and hence the Fourier transform becomes an infinite sum over residues. Through a series of approximations, Unruh has argued that at large values of $\gamma$, i.e., in the limit of ultra-relativistic motion $\upsilon\to1$, and at large values of $\omega$, the detector will read off a temperature that is to leading order that of the cusped motion. See the preprint version of \cite{Unruh:circ}. A similar conclusion has been reached by Bell and Leinaas in \cite[Sec. 5]{Bell:Leinaas} through formal manipulations, where particularly relevant is the discussion around Eq. (57) in that article.

\begin{figure}
\centering
\includegraphics[width=0.8\linewidth]{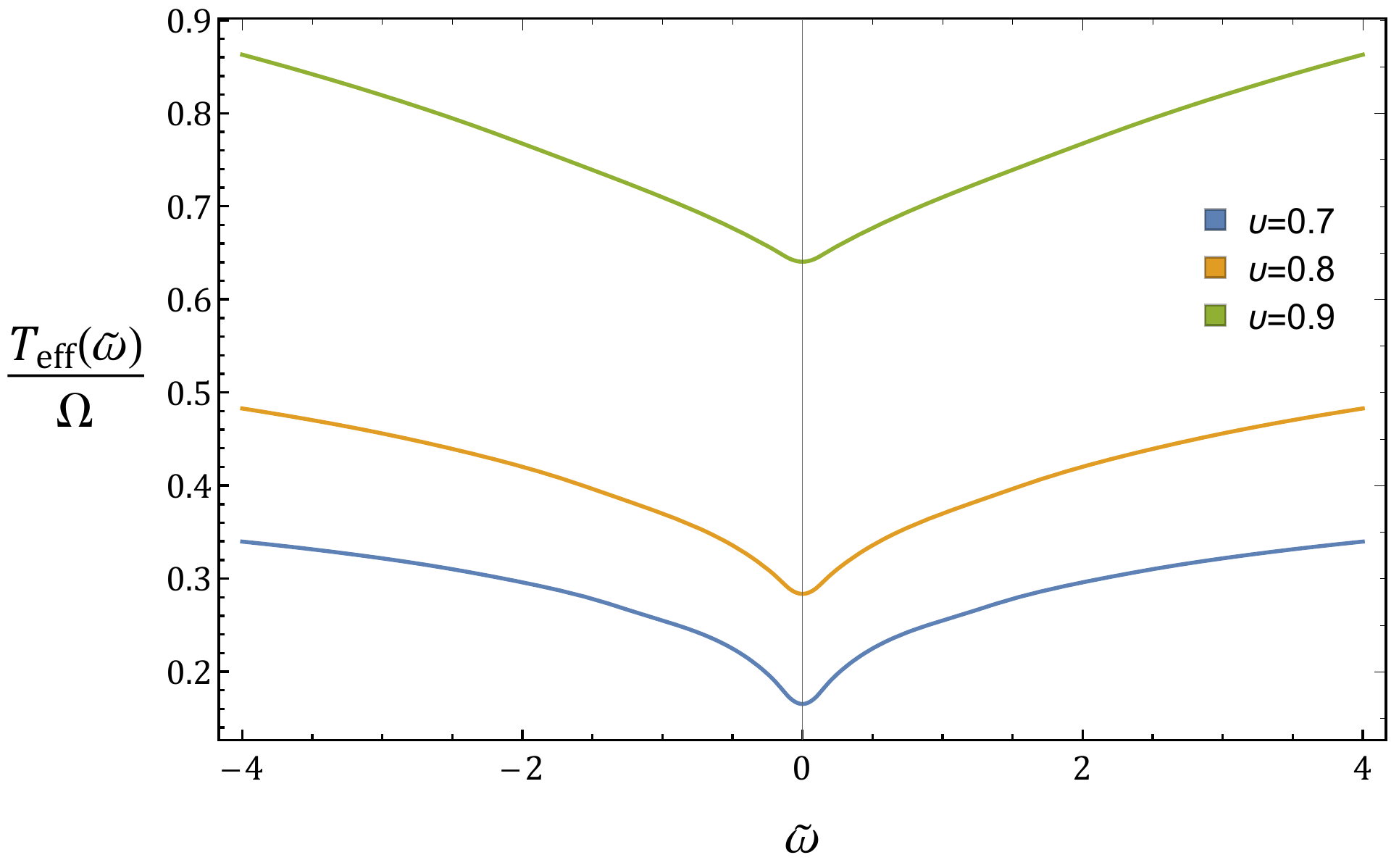}
\caption{The effective temperature \eqref{T:circ} scaled to the angular velocity $\Omega$ of the detector plotted against $\tilde{\omega}=\omega/\Omega$ for three different values of the velocity $\upsilon$.}\label{Teff:circ}
\end{figure}


\section{Conclusions}\label{concl}

We have first studied the relaxation to equilibrium of stationary detectors interacting with quantum fields in KMS states as open systems, and have shown that, in the Born-Markov approximation, detectors relax to thermal states at the KMS temperature. This has been done by dealing with a one-parameter family of switching functions that control the interaction between the detector and the field, and considering the long interaction time limit as the support of the switching function becomes stretched as the switching family parameter becomes large.

We have then relaxed the KMS condition for the quantum fields, requiring only that a frequency-dependent version of the detailed balance condition is satisfied along the detector's worldline, and concluded that, while the detectors do reach an equilibrium state in the asymptotic future, they become thermalised at a frequency-dependent, effective temperature. This situation seems to be relevant in describing the final state of stationary detectors that do not follow the orbit of the timelike Killing vector that defines the positive- and negative-frequency notions of the quantum field in a stationary spacetime. We have exemplified this situation by studying analytically the case of a detector following a cusped motion trajectory in Minkowski spacetime, as well as numerically for a rotating detector also in Minkowski spacetime, and in both cases interacting with a massless Klein-Gordon field in the Minkowski vacuum.

As a perspective, it should be interesting to go beyond the Born-Markov approximation, and to see whether a similar argument for the relaxation of the detectors to equilibrium (Gibbs or frequency-dependent thermal) states can be obtained. In the case of the Unruh effect it has been found in \cite{{DMCA}} that the late-time asymptotic state of the detector is a Gibbs thermal state even if non-Markovian effects are taken into account. This suggests that it is plausible that a similar conclusion can be reached for detectors interacting with fields in arbitrary KMS states.

\begin{acknowledgments}

B.A.J.-A. is supported by a Direcci\'on Genereral de
Asuntos del Personal Acad\'emico, Universidad Nacional
Autónoma de M\'exico (DGAPA-UNAM) postdoctoral fellowship held at Mexico City, and acknowledges additional support from the Sistema Nacional de
Investigadores, Consejo Nacional de Ciencia y
Tecnología (SNI-CONACYT), Mexico. D. M.'s research has been co-financed via a programme of State Scholarships Foundation (IKY) by the European Union (European Social Fund-ESF) and by Greek national funds through the action entitled ``Scholarships programme for postgraduates studies -2nd Study Cycle" in the framework of the Operational Programme ``Human Resources Development Program, Education and Lifelong Learning" of the National Strategic Reference Framework (NSRF) 2014--2020. D. M. acknowledges the Institute for Quantum Optics and Quantum Information - IQOQI Vienna and the Erwin Schr\"odinger Institute (ESI), where part of the work was performed, for their hospitality. We thank Jorma Louko for a brief but fruitful exchange of emails that lead to the pointing of Ref. \cite{Louko:2006zv}.

\end{acknowledgments}

\appendix

\section{Evolution to a stationary state as $\lambda \to \infty$}
\label{app:Stat}

We have assumed throughout that $\widehat{\mathcal{W}}_\mathbb{C}$ is a polynomially bounded non-negative function. As discussed in Sec. \ref{sec:KMS}, this is a weak assumption due to the positivity of the state. For the purposes of the ensuing discussion, let us assume further that $\widehat{\mathcal{W}}_{\beta}^{+}(\omega) > 0$. In this case, let
\begin{subequations}
\label{AB}
\begin{align}
A(\omega) := & \frac{c^2 || \widehat{\chi} ||^2_{L^1}}{(2\pi)^2}\left( \widehat{\mathcal{W}}_{\beta}^{+}(- \omega) + \widehat{\mathcal{W}}_{\beta}^{-}(+ \omega) + \widehat{\mathcal{W}}_{\beta}^{+}(+ \omega) + \widehat{\mathcal{W}}_{\beta}^{-}(- \omega) \right), \label{A} \\
B(\omega) := & \frac{c^2 || \widehat{\chi} ||^2_{L^1}}{(2\pi)^2}\left( \widehat{\mathcal{W}}_{\beta}^{+}(+ \omega) + \widehat{\mathcal{W}}_{\beta}^{-}(+ \omega) \right) \label{B}.
\end{align}
\end{subequations}

By taking a time derivative on Eq. \eqref{mr11-infty} and Eq. \eqref{mr00-infty} and using the complex conjugate of Eq. \eqref{mr01-infty}, after some straightforward rearrangements, one obtains the following system of equations:
\begin{subequations}
\label{mrij-stat}
\begin{align}
\ddot{ \grr}^{(\infty)}_{11}(\grt) =& - A(\omega) \dot{\grr}^{(\infty)}_{11}(\grt), \label{mr11-stat} \\
\ddot{ \grr}^{(\infty)}_{00}(\grt) =& - A(\omega) \dot{\grr}^{(\infty)}_{00}(\grt), \label{mr00-stat} \\
\dot r_{01}(\tau) = &  -B(\omega) r_{01}(\tau) + B(-\omega) \left[ \cos(2 \omega \tau) r_{01}(\tau) + i_{01} \sin(2 \omega \tau) \right],  \label{mrr01-stat} \\
\dot i_{01}(\tau) = & -B(\omega) i_{01}(\tau) + B(-\omega) \left[ \cos(2 \omega \tau) i_{01}(\tau) - r_{01} \sin(2 \omega \tau) \right],  \label{mri01-stat}
\end{align}
\end{subequations}
where $r_{01} = \Re \left(\grr^{(\infty)}_{01}\right)$ and $i_{01} = \Im \left( \grr^{(\infty)}_{01}\right)$, which have as general solutions
\begin{subequations}
\label{mrij-sol}
\begin{align}
 \grr^{(\infty)}_{11}(\grt) = & \alpha_{11}(\omega) \exp\left[- A(\omega) \tau  \right] + \beta_{11}(\omega), \label{mr11-sol} \\
 \grr^{(\infty)}_{00}(\grt) = & \alpha_{00}(\omega) \exp\left[- A(\omega) \tau  \right] + \beta_{00}(\omega), \label{mr00-sol} \\
r_{01}(\tau) = &\gamma(\omega) \exp\left[-B(\omega)\tau+B(-\omega)\frac{\sin(2\omega\tau)}{2\omega}\right]\cos\left(\frac{B(-\omega)\cos(2\omega\tau)}{2\omega}\right) \nonumber\\ 
&-\delta(\omega) \exp\left[-B(\omega)\tau+B(-\omega)\frac{\sin(2\omega\tau)}{2\omega}\right]\sin\left(\frac{B(-\omega)\cos(2\omega\tau)}{2\omega}\right),
 \label{mr01-sol}\\
 i_{01}(\tau) = &\delta(\omega) \exp\left[-B(\omega)\tau+B(-\omega)\frac{\sin(2\omega\tau)}{2\omega}\right]\cos\left(\frac{B(-\omega)\cos(2\omega\tau)}{2\omega}\right) \nonumber\\ 
&+\gamma(\omega) \exp\left[-B(\omega)\tau+B(-\omega)\frac{\sin(2\omega\tau)}{2\omega}\right]\sin\left(\frac{B(-\omega)\cos(2\omega\tau)}{2\omega}\right),
 \label{mri01-sol}
\end{align}
\end{subequations}
where $\alpha_{00}, \beta_{00}, \alpha_{11}, \beta_{11}, \gamma, \delta$ are the general solution coefficients, and in principle functions of $\omega$. 

In the Born-Markov approximation, we are interested in $\tau/\tau_{\rm off} = O(\lambda)$ (where $\tau_{\rm off}$ is the switch-off time of $\chi$), so the off-diagonal elements of the density matrix become exponentially suppressed, cf. Eq. \eqref{mr01-sol} and \eqref{mri01-sol}, while the diagonal elements become constant, cf. Eq. \eqref{mr11-sol} and \eqref{mr00-sol}. This yields the condition
\be
\lim_{\lambda\to\infty}\frac{d\hat{\rho}^{(\lambda)} (\tau)}{d\tau}=0.
\ee



\begin{thebibliography}{}

\bibitem{Unruh} W. G. Unruh, ``Notes on black hole evaporation,'' \href{https://journals.aps.org/prd/abstract/10.1103/PhysRevD.14.870}{Phys. Rev. D {\bf 14}, 870 (1976)}.

\bibitem{Dewitt}
B. S. DeWitt, ``Quantum gravity: the new synthesis,” in {\it General Relativity:
an Einstein centenary survey} (edited by S. Hawking and W. Israel), 
Cambridge University Press, Cambridge, England, 1979.

\bibitem{Birrell}
N. D. Birrell and P. C. W. Davies, {\em Quantum Fields in Curved Space}, (Cambridge University Press, Cambridge, England, 1982).

\bibitem{hawking}
S. W. Hawking,
``Particle creation by black holes,''
\href{https://link.springer.com/article/10.1007\%2FBF01608497}{Commun. Math. Phys. {\bf 43}, 199 (1975)}.

\bibitem{Kubo} R. Kubo, ``Statistical mechanical theory of irreversible processes. I. General
theory  and  simple  applications  in  magnetic  and  conduction  problems,” \href{https://journals.jps.jp/doi/abs/10.1143/JPSJ.12.570}{
J. Phys. Soc. Jpn. {\bf 12}, 570 (1957)}.

\bibitem{Schwinger}  P. C. Martin and J. S. Schwinger, “Theory of many particle systems. I”, \href{https://journals.aps.org/pr/abstract/10.1103/PhysRev.115.1342}{Phys. Rev. {\bf 115}, 1342 (1959)}.

\bibitem{Haag}
R. Haag, {\sl Local Quantum Physics}, (Springer, New York, 1996).

\bibitem{Breuer}H. P. Breuer and F. P. Petruccione, {\em The Theory of Open Quantum Systems} (Oxford University Press, New York,  2007).

\bibitem{Letaw} J. R. Letaw, ``Stationary world lines and the vacuum excitation of noninertial detectors," \href{https://journals.aps.org/prd/abstract/10.1103/PhysRevD.23.1709}{Phys. Rev. D {\bf 23}, 1709 (1981)}.

\bibitem{BL}
C. J. Fewster, B. A. Ju\'arez-Aubry and J. Louko, ``Waiting for Unruh,'' \href{https://iopscience.iop.org/article/10.1088/0264-9381/33/16/165003/meta}{Classical Quantum Gravity {\bf 33}, 165003 (2016)}.

\bibitem{DMCA}
D. Moustos and C. Anastopoulos, ``Non-Markovian time evolution of an accelerated qubit," \href{https://journals.aps.org/prd/abstract/10.1103/PhysRevD.95.025020}{Phys. Rev. D {\bf 95}, 025020 (2017)}.

\bibitem{Louko:2006zv}
  J.~Louko and A.~Satz,
  ``How often does the Unruh-DeWitt detector click? Regularisation by a spatial profile,''
  \href{https://iopscience.iop.org/article/10.1088/0264-9381/23/22/015/meta}{Classical\ Quantum\ Gravity\  {\bf 23}, 6321 (2006)}.

\bibitem{Bell:Leinaas} J. S. Bell and J. M. Leinaas, ``Electrons as accelerated thermometers," \href{https://www.sciencedirect.com/science/article/pii/0550321383906016?via\%3Dihub}{Nucl. Phys. {\bf B212}, 131
(1983)}.

\bibitem{Unruh:circ} W. G. Unruh, ``Acceleration radiation for orbiting electrons'', \href{https://www.sciencedirect.com/science/article/abs/pii/S0370157398000684?via\%3Dihub}{Phys. Rept. {\bf 307}, 163 (1998)}.


\bibitem{DM}
D. Moustos, ``Asymptotic states of accelerated detectors and universality of the Unruh effect," \href{https://journals.aps.org/prd/abstract/10.1103/PhysRevD.98.065006}{Phys. Rev. D {\bf 98}, 065006 (2018)}.


\bibitem{DeBievre}
S. De Bievre and M. Merkli, ``The Unruh effect revisited'', \href{https://iopscience.iop.org/article/10.1088/0264-9381/23/22/026/meta}{Class. Quant. Grav. {\bf 23}, 6525 (2006)}.

\bibitem{Benatti} F. Benatti and R. Floreanini, ``Entanglement generation in uniformly accelerating atoms: Reexamination of the Unruh effect'', \href{https://journals.aps.org/pra/abstract/10.1103/PhysRevA.70.012112}{Phys. Rev. A {\bf 70}, 012112 (2004)}.

\bibitem{Kay:1988mu}
  B.~S.~Kay and R.~M.~Wald,
  ``Theorems on the Uniqueness and Thermal Properties of Stationary, Nonsingular, Quasifree States on Space-Times with a Bifurcate Killing Horizon'', \href{https://www.sciencedirect.com/science/article/pii/037015739190015E}{
  Phys.\ Rept.\  {\bf 207}, 49 (1991)}.

\bibitem{Sanders:2013vza}
  K.~Sanders,
  ``On the construction of Hartle-Hawking-Israel states across a static bifurcate Killing horizon'',
 \href{https://link.springer.com/article/10.1007%2Fs11005-015-0745-2}{Lett.\ Math.\ Phys.\  {\bf 105}, 575 (2015)}.



\bibitem{Good:2017xzl}
  M.~R.~R.~Good, T.~Oikonomou and G.~Akhmetzhanova, ``Uniformly accelerated point charge along a cusp'',
  \href{https://onlinelibrary.wiley.com/doi/abs/10.1002/asna.201713453}{Astron.\ Nachr.\  {\bf 338}, 1151 (2017).}
  
\bibitem{Rad:Singl} N. Rad and D. Singleton, ``A test of the circular Unruh effect using atomic electrons'', \href{https://link.springer.com/article/10.1140\%2Fepjd\%2Fe2012-30387-6}{
Eur. Phys. J. D {\bf 66}, 258 (2012)}.

\bibitem{LePf} J. R. Letaw and J. D. Pfautsch, ``Quantized scalar field in rotating coordinates'', \href{https://journals.aps.org/prd/abstract/10.1103/PhysRevD.22.1345}{Phys. Rev. D {\bf 22}, 1345 (1980)}.


\end{thebibliography}
\end{document}